\documentclass{article}

\usepackage{arxiv}

\usepackage[utf8]{inputenc} 
\usepackage[T1]{fontenc}    
\usepackage{hyperref}       
\usepackage{url}            
\usepackage{booktabs}       
\usepackage{amsfonts}       
\usepackage{nicefrac}       
\usepackage{microtype}      
\usepackage{lipsum}
\usepackage{amsmath, amssymb}
\usepackage{comment}
\usepackage{multirow}
\usepackage{graphicx}
\usepackage{comment}
\usepackage{tikz}
\usepackage{pgfplots}
\usepackage{amsfonts}

\usetikzlibrary{math} 
\pgfplotsset{compat=newest}

\newtheorem{theorem}{Theorem}
\newtheorem{lemma}[theorem]{Lemma}
\newtheorem{proof}{Proof}
\newtheorem{definition}{Definition}
\newtheorem{problem}{Problem}

%
\newcommand{\Corr}{\mathrm{Corr}}
\newcommand{\sign}{\mathrm{sign}}

\title{NAPLES;Mining the lead-lag Relationship from Non-synchronous and High-frequency Data}

\author{
  Katsuya Ito\\
  Preferred Networks, Inc.\\
  Otemachi Bldg., 1-6-1 Otemachi, Chiyoda-ku, Tokyo, 100-0004, Japan \\
  \texttt{katsuya1ito@preferred.jp}
   \And
  Kei Nakagawa\\
  Innovation Lab\\
  Nomura Asset Management Co., Ltd.\\
  1-11-1 Nihonbashi, Chuo-ku, Tokyo, 103-8260, Japan \\
  \texttt{kei.nak.0315@gmail.com} 
}

\begin{document}
\maketitle

\begin{abstract}
In time-series analysis, the term "lead-lag effect" is used to describe a delayed effect on a given time series caused by another time series. 
lead-lag effects are ubiquitous in practice and are specifically critical in formulating investment strategies in high-frequency trading. 
At present, there are three major challenges in analyzing the lead-lag effects.
First, in practical applications, not all time series are observed synchronously.
Second, the size of the relevant dataset and rate of change of the environment is increasingly faster, and it is becoming more difficult to complete the computation within a particular time limit.
Third, some lead-lag effects are time-varying and  only last for a short period, and their delay lengths are often affected by external factors. 
In this paper, we propose NAPLES (Negative And Positive lead-lag EStimator), a new statistical measure that resolves all these problems. 
Through experiments on artificial and real datasets, we demonstrate that NAPLES has a strong correlation with the actual lead-lag effects, including those triggered by significant macroeconomic announcements.
\end{abstract}

\keywords{lead-lag effect \and High-frequency trading \and NAPLES.}

\begin{figure}[ht]\label{fig_idea}
\centering
 \begin{minipage}{0.31\hsize}
\begin{tikzpicture}[scale=1]
\pgfdeclareimage[width=3.5cm,height=3cm]{fig-wave-a}{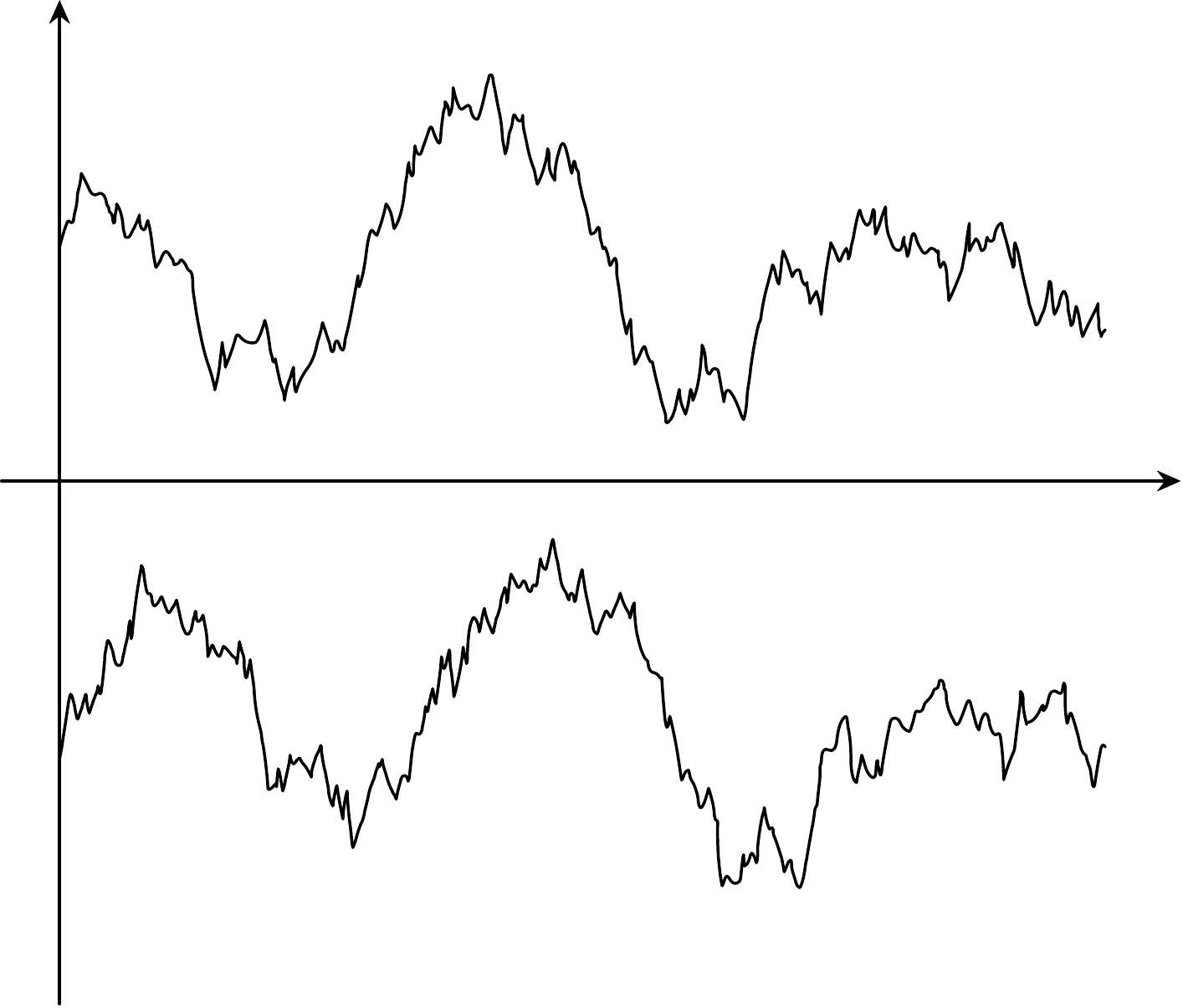}
\node (IMAGE){\pgfuseimage{fig-wave-a}};
\draw (-1.55cm, 0.8cm) node [left] {$X$};
\draw (-1.55cm, -0.7cm) node [left] {$Y$};
\draw (1.55cm, 0cm) node [above] {$t$};
\draw (0cm, -2.0cm) node [above] {(a) Observed Timeseries};

\end{tikzpicture}
\end{minipage}
\begin{minipage}{0.31\hsize}
\begin{tikzpicture}[scale=1]
\pgfdeclareimage[width=3.5cm,height=3cm]{fig-wave-b}{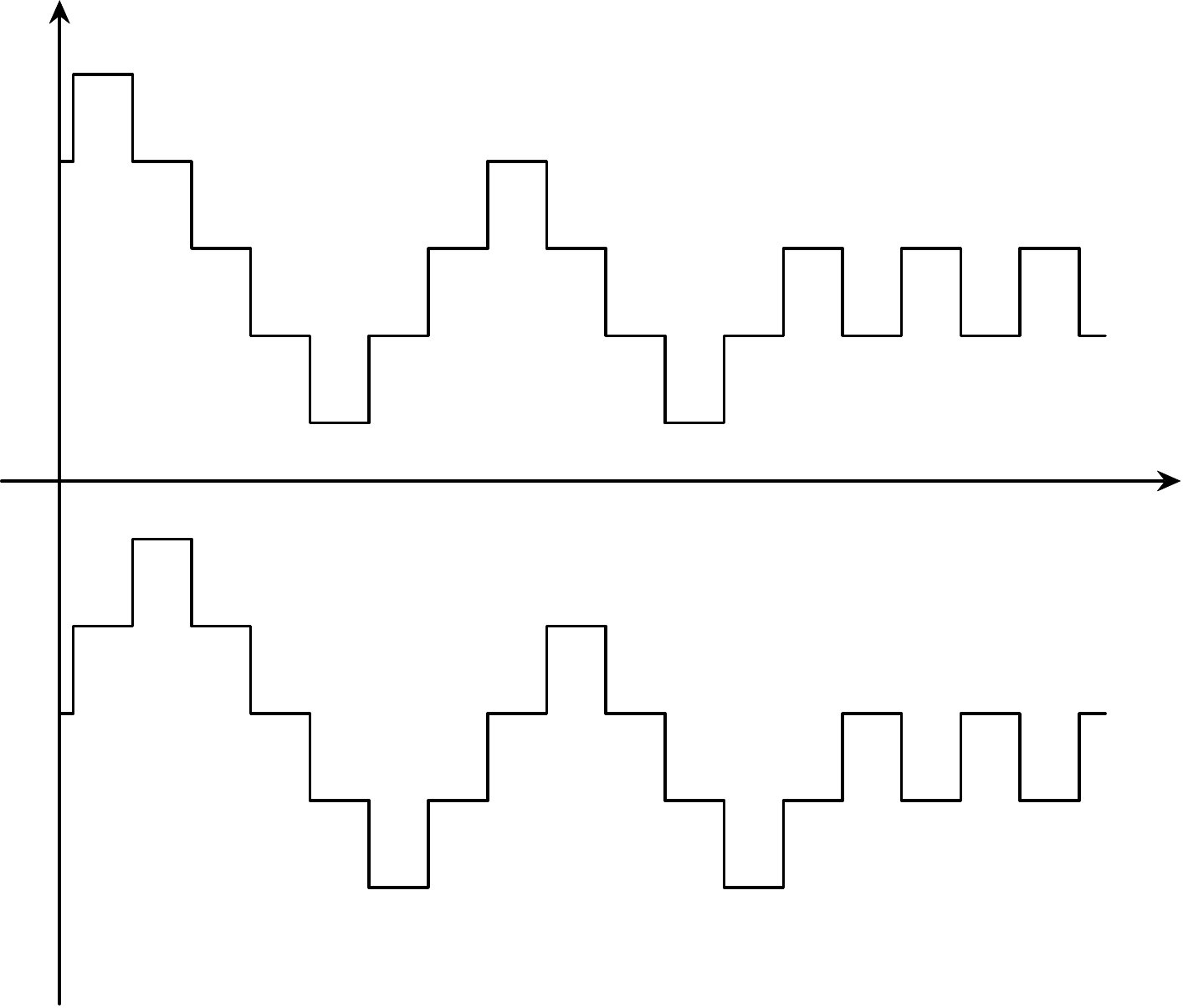}
\node (IMAGE){\pgfuseimage{fig-wave-b}};
\draw (-1.55cm, 1.05cm) node [left] {$\hat{X}$};
\draw (-1.55cm, -0.55cm) node [left] {$\hat{Y}$};
\draw (1.55cm, 0cm) node [above] {$t$};
\draw (0cm, -2.0cm) node [above] {(b) Simplified Timeseries};
\end{tikzpicture}
\end{minipage}
 \begin{minipage}{0.31\hsize}

\begin{tikzpicture}[scale=1]
\pgfdeclareimage[width=3.5cm,height=3cm]{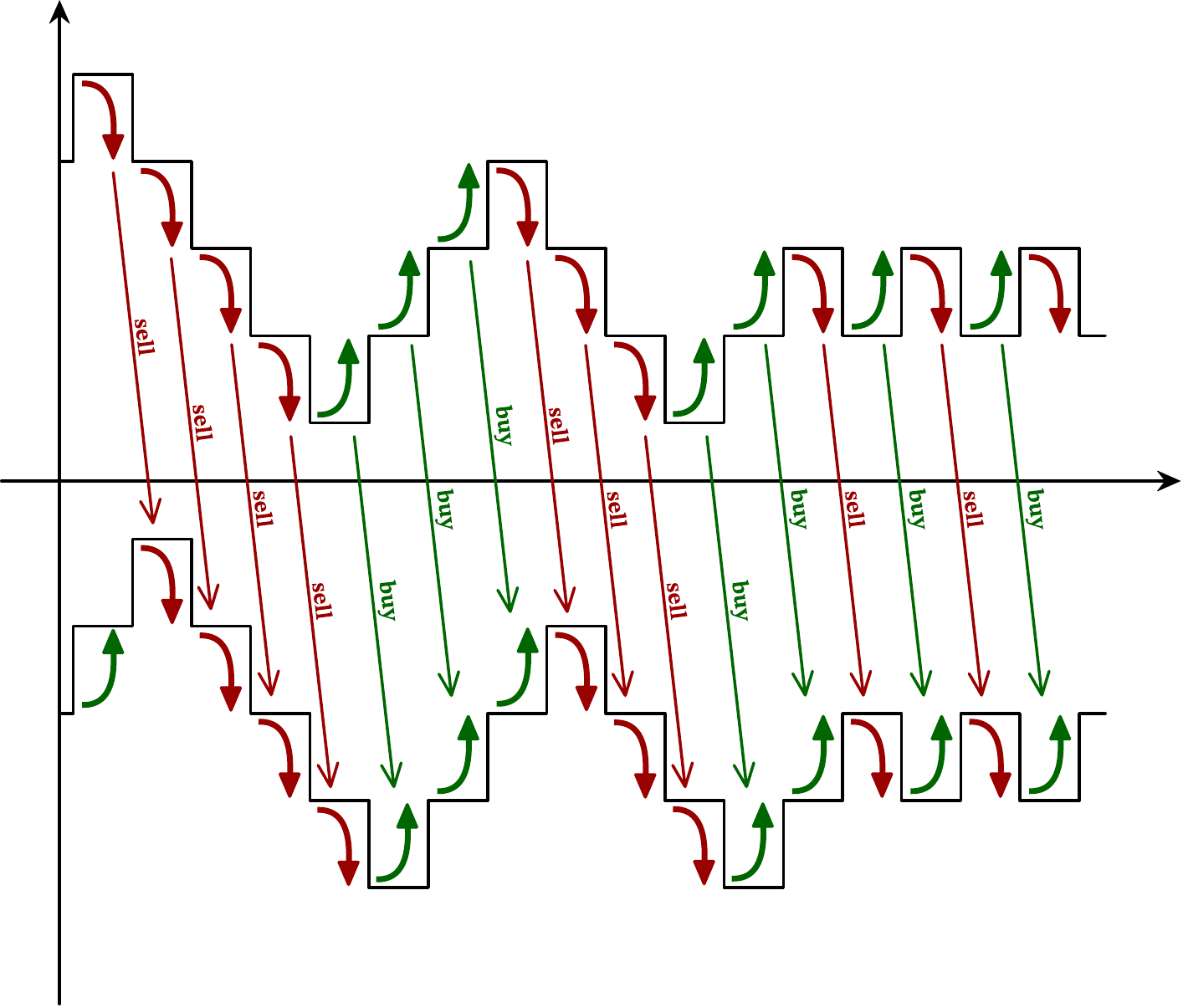}{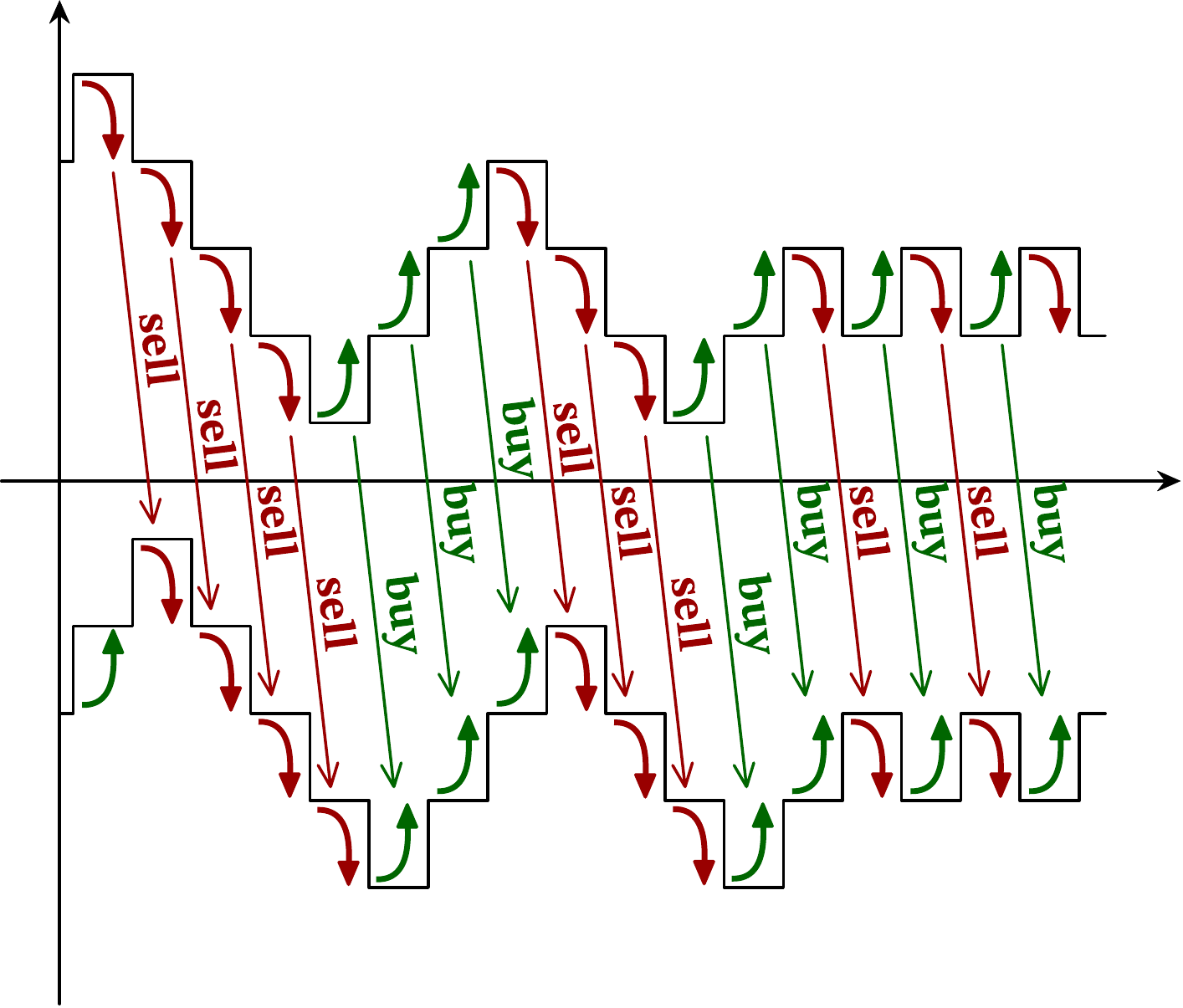}
\node (IMAGE){\pgfuseimage{fig-wave-c}};
\draw (-1.55cm, 1.05cm) node [left] {$\hat{X}$};
\draw (-1.55cm, -0.55cm) node [left] {$\hat{Y}$};
\draw (1.55cm, 0cm) node [above] {$t$};
\draw (0cm, -2.0cm) node [above] {(c) Trade $\hat{Y}$ According to $\hat{X}$};

\end{tikzpicture}
\end{minipage}
\caption{Illustration of our method.
(a $\to$ b) We simplify the observed time series using a sign.
(b $\to$ c) We estimated the lead-lag by making a trading strategy using the simplified time series.
In the trading strategy, we trade $\hat{Y}$ when $\hat{X}$ moves, and we can make profit if $X$ is leading $Y$.}

\end{figure}

\section{Introduction}
The lead-lag effect is a phenomenon wherein the change of one time series affects another time series with a delay. 
This effect appears in various fields from financial markets to control engineering.
In particular, the lead-lag effect in financial markets has been thoroughly investigated 
because it is a crucial factor in investment, and the knowledge of the size of the lag can potentially help investors make a profit without taking any risk.
\par
Discovering lead-lag effects is vital because high-frequency trading uses high-speed communication, and the high-performance computers used in high-frequency trading can make profits by using unapparent lead-lag effects in millisecond and nanosecond timescales \cite{Biais2015,Wang2015,Martinez2012}.
Moreover, high-frequency trading also makes the market more "efficient."
In high-frequency trading, the lead-lag effect does not usually persist for a long time because market investors identify such effects, and are often quick to take advantage of such asymmetry in the market
\cite{CHABOUD2014,Case2011,Bhaskkar2008}.
Therefore, to make a profit, financial analysts must be quick and accurate in estimating the lead-lag effect.

To detect the lead-lag effect, we formulate the following maximization problem(eq.\ref{main}).
Let $X_t,Y_t$ be asset prices, and we assume that $X_t$ is observed at $s_1 < s_2 < \cdots < s_n$ and $Y_t$ is observed at $t_1 < t_2 < \cdots < t_m$.
We say that $X_t$ leads $Y_t$ with \emph{a lead-lag parameter} $\theta$ if the correlation between $X_t$ and $Y_{t+\theta}$ is stronger than the correlation between $X_t$ and $Y_{t+\theta'}$ for any $\theta' \neq \theta$.
We find the maximum lead-lag parameter $\hat{\theta}$ as follows:
\begin{equation}
\hat{\theta} = \mathrm{argmax}_{\theta \in \Theta} \hat{\Corr}(\{X_{s_i}\}_{i=1}^n,\{Y_{t_j+\theta}\}_{j=1}^m)
\label{main}
\end{equation}

where $\hat{\Corr}(\{X_{s_i}\}_{i=1}^n,\{Y_{t_j}\}_{j=1}^m)$ is a correlation estimator for the observed time series, and $\Theta$ is a candidate of the lead-lag parameters.\par

In estimating the lead-lag effect in high-frequency trading, there are three major challenges: (1) Non-synchronicity of timestamps (2) Computational complexity (3) Time-variance of the lead-lag effect.\par
(1) By the non-synchronicity of timestamps, we mean that the price of each asset is observed independently and irregularly.
In high-frequency trading, price movement occurs irregularly since transactions happen asynchronously and independently of each asset.
Non-synchronicity makes it difficult for us to estimate the correlation. Consequently, it does not allow us to estimate a lead-lag effect by simple methods.
Hoffman, Rosenbaum, \& Yoshida(2013) proposed a solution to this problem using an estimator that has mathematical convergence \cite{Hoffmann2013}.
Moreover, various theoretical analyses
\cite{Chiba2019,Hayashi2017,Hayashi2018}, and empirical analyses \cite{Alsayed2014,Bollen2016,Ceron2016,HuthAbergel2011} have practically applied the method.\par
(2) Computational complexity is a significant challenge in high-frequency trading, for which relevant information is becoming increasingly available.
The number of assets and lag candidates also increase, along with the complexity of the relationships between them.
If the length of the time series is $T$, the number of assets is $N$, and the number of lag candidates is $L$, the current extent of the calculation is in the order of $O(T^2 N^2 L)$. 
A calculation algorithm better than this order has not been proposed.\par

(3) The time-variance of the lead-lag parameter is another critical factor that has not been explored in depth.
The time-variance of the lead-lag relationships means that some lead-lag effects occur temporally. 
Most existing lead-lag estimation methods assume that the lead-lag parameter $\theta$ is constant; therefore, they can be applied only to limited situations.
This time-varying nature of the lead-lag effect has been observed empirically \cite{Huth2014}.\par
In this paper, we propose a new estimator of lead-lag effects to solve these three problems.
This method can not only be performed using a non-synchronously observed set of time series, but it can also be performed with a computational cost that is linear to the number of observations $O(N)$.
The main contributions of this paper are as follows.
\begin{itemize}
    \item We propose an index that estimates the lead-lag effects in high-frequency trading, and theoretically prove that our index can estimate lead-lag.
    \item We demonstrate the effectiveness of our method by using high-frequency simulated data and market data.
Specifically, we show that the estimation error is smaller, and the convergence speed is faster than in other methods.
    \item We use our method to detect the existence of the temporal lead-lag effect when there are significant macroeconomic announcements with an experiment using foreign exchange market data.
\end{itemize}
\section{Problem Formulation and Related Work}
\subsection{Notation}
We denote the interval of the numbers between $a \in \mathbb{R}$ and $b \in \mathbb{R}$ as follows:
$[a,b]=\{x | a\le x \le b\}$ ,$]a,b] = \{x| a < x \le b\}$, $[a,b[\  = \{x | a \le x < b\}$, $]a,b[ \ = \{x| a<x<b\}$.
$a\vee b$ and $a \wedge b$ stand for operations $\max\{a,b\}$ and $\min\{a,b\}$, respectively. 
$1_A$ denotes an indicator function which is $1$ when the condition $A$ is true and $0$ when $A$ is false.
$\theta$ denotes the lead-lag delay parameter between two time series, and $\rho$ denotes the correlation between the two time series.


\subsection{Problem Formulation}
We formulate the lead-lag estimation problem mathematically.
First, we define the notion of the lead-lag effects and non-synchronous observations.
Let $X_t,Y_t$ be two stochastic processes 
(e.g., asset returns).
Assume that $X_t$ is observed at $0=s_1<s_2<\cdots<s_n=T$ and $Y_t$ is observed at $0=t_1<t_2<\cdots<t_m=T$.

\begin{definition}
$X_t$ is said to be \textbf{leading} $Y_t$ with the \textbf{lead-lag parameter} $\theta$
If for all $\alpha$, the absolute value of the correlation between $X_t,Y_{t+\alpha}$ is larger than that of $X_t,Y_{t+\theta}$
\end{definition}

\begin{definition}
$X_t$ and $Y_t$ are said to be observed \textbf{synchronously} if $s_i = t_i$ holds for all $i$ and are said to be observed \textbf{non-synchronously} if they are not observed synchronously.
\end{definition}
Our main problem is to infer the lead-lag effect from non-synchronously observed data.
\begin{problem}
Assume that $X_t$ leads $Y_t$ with the lead-lag parameter $\theta$.
Our problem is to infer the lead-lag parameter $\hat{\theta}$ only from non-synchronously observed data $\{X_{s_i}\}_{i=0}^n$ and $\{Y_{t_j}\}_{j=0}^m$
\end{problem}

\subsection{Related Work}
Many studies, such as \cite{KAWALLER1987} and \cite{Lo1990}, showed the existence of the lead-lag effect for synchronously observed time series.
Because high-frequency trading has become the standard trading method, the importance of  modeling and analysis of non-synchronously observed data is increasing both in theory and in practice.
Hayashi and Yoshida \cite{Hayashi2005} proposed the following covariance estimator.

\begin{definition}[\cite{Hayashi2005}]
Assume that $X_t$ is observed at $0=s_1<s_2<\cdots<s_n=T$ 
and $Y_t$ is non-synchronously observed at $0=t_1<t_2<\cdots<t_m=T$.
Then, the Hayashi-Yoshida covariance estimator is defined as follows.
\[
\mathrm{HY}(X_s,Y_t,\{s_i\}_{i=1}^n,\{t_j\}_{j=1}^m):=
\]
\[
\sum_{i=1}^{n-1} \sum_{j=i}^{m-1} 
(X_{s_{i+1}}-X_{s_{i}})(Y_{t_{j+1}}-Y_{t_{j}})
1\{{]t_{i-1},t_{i}] \cap ] s_{ j-1 } , s_{ j } ]\neq\emptyset}\}
\]
\end{definition}
They proved the convergence of this estimator when $X_t,Y_t$ are in Brownian motion \cite{Hayashi2005}.

On the other hand, Hoffman et al. \cite{Hoffmann2013} showed that the covariance estimator with a transferred time series $U^n(\tilde{\theta})$ can estimate the lead-lag effect when $X_t,Y_t$ are in Brownian motion.

\[
U^n(\tilde{\theta})
: = 1 _ { \tilde { \theta } \geq 0 }
\mathrm{HY}(X_{s-\tilde{\theta}},Y_t,\{s_i-\tilde{\theta}\}_{i=1}^n,\{t_j\}_{j=1}^m)
+
\]
\[
1 _ { \tilde { \theta } < 0 } 
\mathrm{HY}(X_{s+\tilde{\theta}},Y_t,\{s_i+\tilde{\theta}\}_{i=1}^n,\{t_j\}_{j=1}^m).
\]

Chiba~\cite{Chiba2019} extended this estimator to the case when $X_t,Y_t$ are partially in Brownian motion. 
Hayashi and Koike \cite{Hayashi2017} improved the estimator with a Fourier transform and Hayashi and Koike \cite{Hayashi2018} improved it with a Wavelet transform.\par

In contrast, Dobreva and Schaumburgb \cite{DobrevaSchaumburgb2017} introduced the following lead-lag estimation methods.

\begin{definition}[\cite{DobrevaSchaumburgb2017}]
Assume that two asset prices $X$,$Y$ are observed.
$
1\{ \mbox{Z active\ at\ t} \}
$
is a function that returns $ 1 $ when a certain asset $ Z \in \{X, Y\} $ is traded at time $ t $, otherwise $ 0 $.
Here, the Dobreva-Schaumburgb index $ DS_t $ is calculated as follows.
\[
DS_t:=
\frac{
\sum _ { i = | t | } ^ { N - | t | } 
1 \{ \mbox{A active\ at\ } i+t \}
\cap 
1\{ \mbox {B  active at\ } i \}
}{
\sum _ { i = | t | } ^ { N - | t | } 
1 \{ \mbox{A active\ at\ } i+t \}
\wedge
\sum _ { i = | t | } ^ { N - | t | } 
1 \{ \mbox{B active\ at\  } i \}
}.
\]
\end{definition}

We can estimate the lead-lag effect with the Dobreva-Schaumburgb index in the same way as \cite{Hoffmann2013}. 
Its statistical properties are not apparent, but empirical studies show its effectiveness.

\section{Methodology}
\subsection{NAPLES : Our Estimator of lead-lag Effect}
We define the index $R(t)$ for the lead-lag effect estimation, which is the main method of this paper.

Before defining our lead-lag estimator, we prepare some notations.
Assume that we observe $\{Z_{u_k}\}_{k=1}^n \in \{\{X_{s_i}\}_i,\{Y_{t_j}\}_j\}$.
Let $r^{(Z)}_{u_k}$ as the logarithmic returns of assets $Z$,
$
r^{(Z)}_{u_k}=\log (Z_{u_{k}}/Z_{u_{k-1}}),\
$.
Let $b^{(Z)}_{u_k}$ be the sign of $r^{(Z)}_{u_k}$, 
$
b^{(Z)}_{u_k}=\sign \{r^{(Z)}_{u_k}\}$.
Then we denote $\hat{Z_t}$ as the cumulative sum of the signs of returns $b^{(Z)}_{u_k}$,$
\hat{Z_t} = \sum_{k=1}^n 1_{u_k < t} b_{u_k}^{(Z)},\ \ 
$
\begin{definition}
(NAPLES; Negative And Positive lead-lag EStimator)
Assume that $X_t$ is observed at $0=s_1<s_2<\cdots<s_n=T$ 
and $Y_t$ is non-synchronously observed at $0=t_1<t_2<\cdots<t_m=T$. 
we define our index for lead-lag estimation as follows:

\[
R(t;X_t,Y_t):=
\sum_{i=1}^{n-1} 
\left(
b^{(X)}_{s_{i}}\hat{Y}_{s_{i+1}} -b^{(X)}_{s_{i}}\hat{Y}_{s_{i}} 
\right)
1_{s_{i+1}<t}
-
\sum_{j=1}^{m-1} 
\left(
b^{(Y)}_{t_{j}}\hat{X}_{t_{j+1}} - b^{(Y)}_{t_{j}} \hat{X}_{t_{j}} 
\right)
1_{t_{j+1}<t}
\]
\end{definition}

This definition looks complicated but is based on a simple idea.
If $X$ is leading $Y$, trading $Y$ following the price movement of $X$ makes a profit or a loss.
$R(t)$ is an index that expresses this idea numerically.
The first term
\[
b^{(X)}_{s_ {i}} \hat{Y}_{s_{i + 1}} -b ^ {(X)}_{s_ {i}} \hat{Y}_{ s_ {i}}
\]
is the  return of the trading strategy which buys $\hat{Y}$ at $s_i$ and sells $\hat{Y}$ at $s_{i+1}$ when $X$’s price rises at time $s_i$ and does the opposite when the price of $X$ falls.

The second term
\[
b^{(Y)}_{t_{j}} \hat{X}_{t_{j + 1}}-b^{(Y)}_{t_{j}} \hat{X}_{t_{j}}
\]
is the return of the trading strategy which buys $\hat{X}$ at $t_j$ and sells $\hat{X}$ at $t_{j+1}$ when $Y$’s price rises at time $t_j$ and does the opposite when the price of $Y$ falls.
Figure \ref{fig_idea} is the illustration of our idea.

This method solves the three problems described above.
(1) For the issue of non-synchronicity, we can use this method for time series of non-synchronous observations. 
(2) To solve computational complexity, our estimator is simply a sum of $N$ terms; therefore, we can calculate this estimator with the order $O(N)$. 
(3) To address the time variance of the lead-lag effect, our estimator is a return of a trading strategy so that we can calculate the lead-lag parameter at each time.

To clarify the mathematical properties of $R(t)$, we present the following theorem.

\begin{theorem}\label{mainthm}
Assume that two asset prices $X, Y$ follow geometric Brownian motion with the correlation $\rho$ and have a lead-lag relationship of $\theta$. 
In addition, we assume that the observation times $s_i$ and $t_j$ are always equally spaced $\Delta $. 
Then, $ R (T) $ is as follows. \\
\[
\mathrm{E}[R(T;X_t,Y_t)]=\left\{\begin{array}{ll}{
\frac{l}{\pi} \sign(\theta) \arcsin \left(\frac{\rho|\theta|}{\Delta}\right)} & {\text{if } 0<|\theta|<\Delta } \\ 
{\frac{l}{\pi} \sign(\theta) \arcsin \left(\frac{\rho(2 \Delta-|\theta|)}{\Delta}\right)} & {\text{if } \Delta<|\theta|<2 \Delta } \\
{0} & {\text{otherwise}}\end{array}\right.
\]
\end{theorem}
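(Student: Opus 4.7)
The plan is to express $\mathrm{E}[R(T)]$ as a sum of expectations of products $\mathrm{E}[b^{(X)}_{s_i} b^{(Y)}_{t_j}]$ and then apply the classical identity for signs of a bivariate Gaussian (Sheppard's formula) to convert each term into an arcsine of the correlation between the corresponding returns. Since $\log X$ and $\log Y$ are correlated Brownian motions, each pair of log-returns is jointly Gaussian; and because the returns are increments over length-$\Delta$ intervals of Brownian motions with instantaneous correlation $\rho$ and relative time shift $\theta$, the quantity $\Corr(r^{(X)}_{s_i}, r^{(Y)}_{t_j})$ equals $\rho$ times the relative overlap of the interval $[s_{i-1}, s_i]$ with the $\theta$-translated interval $[t_{j-1} - \theta, t_j - \theta]$.

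First I would expand the cumulative sums, writing $\hat{Y}_{s_{i+1}} - \hat{Y}_{s_i} = \sum_j b^{(Y)}_{t_j} \mathbf{1}_{\{s_i \le t_j < s_{i+1}\}}$ and symmetrically for $\hat{X}_{t_{j+1}} - \hat{X}_{t_j}$. This rewrites $R(T)$ as the difference of two double sums of $b^{(X)}_{s_i} b^{(Y)}_{t_j}$ with explicit indicator coefficients determined by the equally spaced grids. Taking expectations and invoking $\mathrm{E}[\sign(U)\sign(V)] = \tfrac{2}{\pi}\arcsin(\Corr(U, V))$ on each surviving pair reduces the problem to a finite combinatorial computation of interval overlaps.

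Next I would split on the range of $\theta$. For $0 < |\theta| < \Delta$, each $X$-interval meets exactly two consecutive $\theta$-shifted $Y$-intervals, with overlap lengths $\Delta - |\theta|$ and $|\theta|$; for $\Delta < |\theta| < 2\Delta$ the relevant overlaps are $2\Delta - |\theta|$ and $|\theta| - \Delta$; and for $|\theta| \ge 2\Delta$ no intervals meet. The subtractive structure of $R(T)$ is arranged precisely so that the ``diagonal'' overlap (the one that does not depend on $\theta$ being nonzero) cancels between the two sums, leaving only the off-diagonal contribution with correlation $\rho|\theta|/\Delta$ (respectively $\rho(2\Delta - |\theta|)/\Delta$). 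Each of the $l$ grid cells fitting inside $[0, T]$ contributes a single $\arcsin$ term, yielding the prefactor $l/\pi$, and the $\sign(\theta)$ factor emerges because negating $\theta$ swaps the roles of the two sums in the definition of $R(T)$.

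The main obstacle will be the bookkeeping: matching the indicator functions $\mathbf{1}_{\{s_i \le t_j < s_{i+1}\}}$ and $\mathbf{1}_{\{t_j \le s_i < t_{j+1}\}}$ to the correct surviving interval overlaps in each regime of $\theta$, verifying that the diagonal cancellation is exact, and handling boundary terms near $t = 0$ and $t = T$ so that the constant $l$ is identified correctly and the $|\theta| \ge 2\Delta$ case genuinely vanishes.
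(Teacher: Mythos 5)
Your plan follows the paper's proof essentially step for step: reduce $\mathrm{E}[R(T)]$ to expectations of products of signs of jointly Gaussian Brownian increments, convert each such expectation via the Gaussian orthant/arcsine law (the paper's Lemma $\mathrm{P}(N>0,M>0)=\tfrac14+\tfrac{\arcsin\rho}{2\pi}$ is exactly your Sheppard identity $\mathrm{E}[\sign(U)\sign(V)]=\tfrac{2}{\pi}\arcsin(\Corr(U,V))$), compute each correlation as $\rho$ times a normalized interval overlap using the independent-increments property, and finish with the case analysis on the range of $\theta$ for the equally spaced grid. The only cosmetic difference is that you expand $\hat{Y}_{s_{i+1}}-\hat{Y}_{s_i}$ into individual $b^{(Y)}_{t_j}$ terms, whereas the paper keeps it as a single aggregated sign under its at-most-one-observation-per-interval assumption; in the equally spaced setting these coincide.
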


We show the general case of this theorem and its proof in the Appendix.
Thus, we can estimate the lead-lag parameter $\theta$ by calculating
\begin{equation}
\hat{\theta} = \mathrm{argmax}_{\theta \in \Theta} 
R(T,X_t,Y_{t+\theta})
\label{leadlagest}
\end{equation}

\subsection{Interpretation of The Main Theorem}
We interpret the main Theorem \ref{mainthm}.
In Equation \ref{leadlagest}, we consider a situation where we want to check if the lead-lag effect exists for a given $X_t$ and $Y_{t+\alpha}$.
Figure \ref{graph:ert} shows that our estimator achieves the maximum value at $\Delta$ and shows that we can estimate the lead-lag parameters by using argmax like in Equation \ref{leadlagest}.
We show the meaning of each point in Figure \ref{graph:ert}.
At point C, there is no lead-lag effect between $X$ and $Y$, so there is no profit if one trades $Y$ according to the price movement of $X$.
At point D, there is a lag equal to the trading interval, so we can always make a profit by trading $Y$ $\Delta$ seconds after $X$ moves.
At point B, we perform the opposite trade; thus, a loss occurs.
On the other hand, at point E and A, if we trade two periods after the price movement of X, we cannot make a profit because $X$ and $Y$ are in Brownian motion, and their increments are independent.
In other words, $X_{t+\Delta} - X_t$ and $X_{t+2\Delta} - X_{t+\Delta}$ are independent.
\begin{figure}[ht]
\centering
\begin{tikzpicture}[scale=1]

\begin{axis}[ 
xlabel={$\theta$}, ylabel={$E[R(T)]$}
,axis lines=middle
,samples=41, thick
,domain=-4:4
,ymin=-10, ymax=10,
,yticklabels={,,}
,xticklabels={,,}
,extra x ticks={1,-1,2,-2}
,extra x tick labels={$\Delta$,$-\Delta$,$2\Delta$,$-2\Delta$}
,extra x tick style={xticklabel style={yshift=0.5ex,xshift=0.6ex, anchor=south}
}
]
\pgfmathsetmacro{\l}{0.26}
\pgfmathsetmacro{\p}{3.141592}
\pgfmathsetmacro{\r}{1.0}
\pgfmathsetmacro{\d}{1.0}
\addplot+[no marks,domain=0:\d,black]{((\l)/\p)*asin((\r*x)/\d)};
\addplot+[no marks,domain=-\d:0,black] {-((\l)/\p)*asin((-\r*x)/\d)};
\addplot+[no marks,domain=\d:2*\d,black]{((\l)/\p)*asin((\r*(2*\d-x)/\d))};
\addplot+[no marks,domain=(-2)*\d:(-1)*\d,black]{-((\l)/\p)*asin((\r*(2*\d+x)/\d))};
\addplot+[no marks,domain=2*\d:4*\d,black]{0};
\addplot+[no marks,domain=(-4)*\d:(-2)*\d,black]{0};
\draw (0,0)node[below right]{C}; 
\draw (\d,\l*31)node[below right]{D}; 
\draw (-\d,-\l*30)node[below right]{B}; 
\draw (-2*\d-0.3,0)node[below right]{A}; 
\draw (2*\d,0)node[below right]{E}; 



\end{axis}
\end{tikzpicture}
\caption{$E[R(T)]$}\label{graph:ert}
\end{figure}


\section{Experiment}
In this section, we experiment with simulated data and real market data to show the effectiveness of our proposed method. 
We compare our method with the Hoffman-Rosenbaum-Yoshida method and the Dobreva-Schaumburgb method\cite{Hayashi2018}.

\subsection{Comparison of Estimation Error and Convergence Speed}
We use the following paired geometric Brownian motion in the experiment.
\[
\left\{
\begin{array}{l}{X_{t}=x_{0}\exp\left(\sigma_{1}B_{t}^{(1)}\right)}\\
{Y_{t}=y_{0}\exp\left(\rho\sigma_{2}B_{t-\theta}^{(1)}+\sigma_{2}\left(1-\rho^{2}\right)^{1/2}W_{t-\theta}\right)}
\end{array}
\right.
\]

We apply the following parameters: the initial value of each asset $ x_0 = y_0 = 100$,
volatility of each asset $ \sigma_1 = \sigma_2 = 1.0, \ $,
lead-lag parameter $ \theta = 10 $, correlation $ \rho = 0.9 $,
candidates of lead-lag parameter $\Theta= \{-100, -99, \cdots, 99,100 \}$.
We observed $X_t$ and $Y_t$ at $\{s_i\}$ and $\{t_j\}$ and $s_i - s_{i-1} \sim N(10,2)$, $t_j - t_{j-1} \sim N(10,2)$.
We investigated how long each algorithm takes to converge by changing $T$, which is the maximum value of observation time and we use $T \in \{10^{2.5}, 10^{2.6}, \cdots, 10^{5.0} \} $. 
We evaluated the Mean Absolute Error of the estimation and performed the experiments 1000 times generating random numbers, $n=1000$ times. Let $ \hat{\theta}_i $ be  the lead-lag we estimated by each algorithm in the $i$-th experiment and let the real lead-lag be $\theta_i $, then we calculated the average of the absolute errors $ \frac{1}{n} \sum_ {i = 1}^n | \hat{\theta}_i-\theta_i |$.
Figure \ref{fig_convergence} shows the log of the Mean Absolute Error for each $\log_{10} T$ and that our method converges best to the real value $\theta$.

\begin{figure}
\centering
\includegraphics[width=0.8\textwidth]{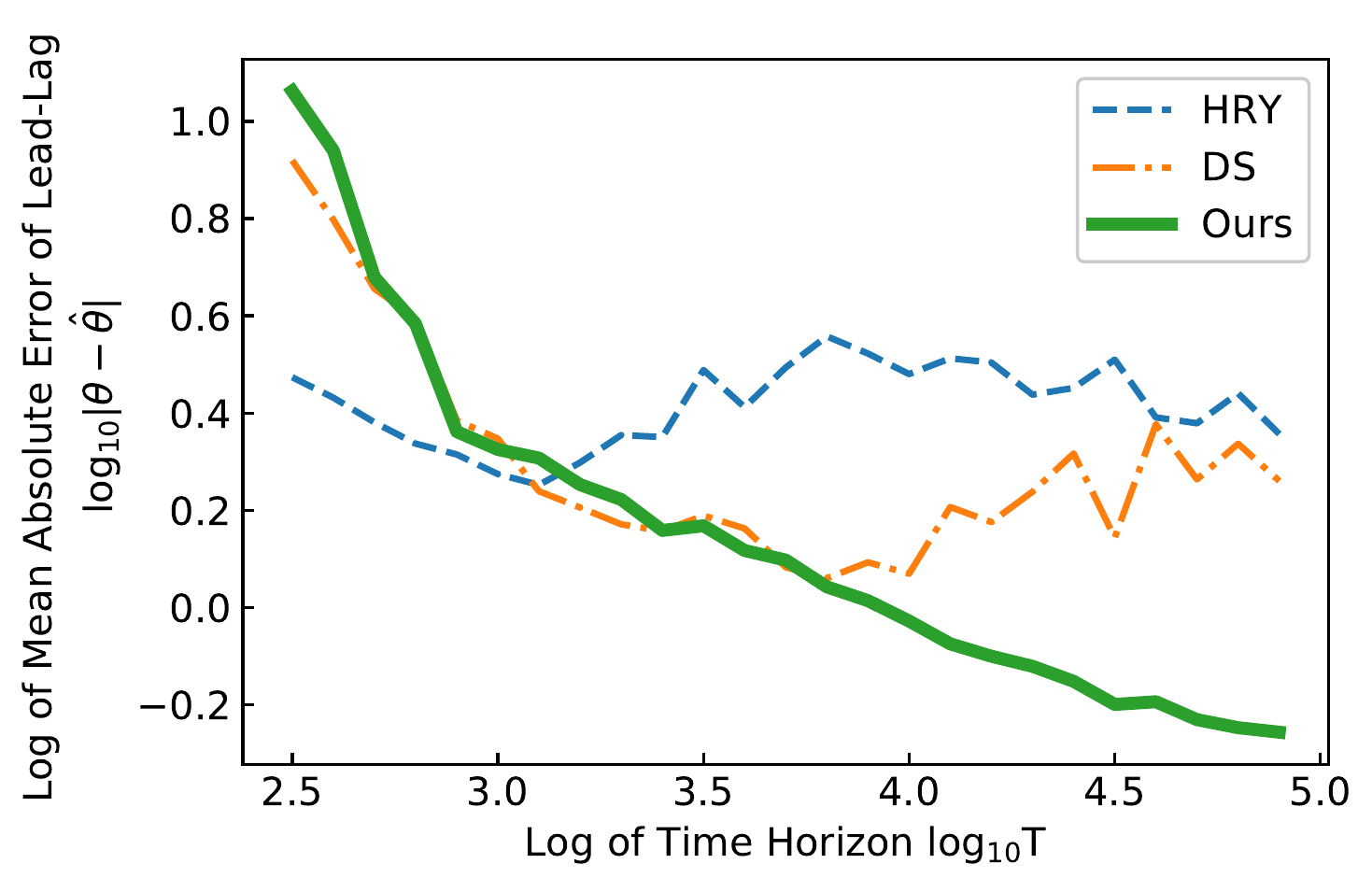}
\caption{
Each line shows the log of the Mean Absolute Error of each estimator.
The figure shows that our method converges best to the real value.
} \label{fig_convergence}
\end{figure}

\subsection{lead-lag estimation of non-synchronously observed foreign exchange data}
In this section, we demonstrate that our method can estimate lead-lag effects with actual market data.
We estimated lead-lag parameters with actual foreign exchange market data.
We chose two currency pairs AUD / USD (Australian Dollar / United States Dollar) and NZD / USD (New Zealand Dollar / United States Dollar), which are geographically close and are categorized as similar assets for international investors. 
We used the Tick-by-Tick data from January 1, 2019, to September 30, 2019, which were downloaded from Dukascopy Historical Data Feed \footnote{\url{https://www.dukascopy.com/}}.\par
We used "Economic calendar Investing.com Forex (2011-2019)" for the list of important announcements in Australia and New Zealand \footnote{https://www.kaggle.com/devorvant/economic-calendar}.
Figure \ref{fig:nzd} and Figure \ref{fig:aud} illustrate the lead-lag parameter before and after the important announcements.
Moreover, the insights from these figures are as follows:

\begin{enumerate}
    \item lead-lag relationship changes before the important announcements.
    \item Our index can find the lead-lag earlier than Hayashi-Yoshida's estimator. (a1, a2 and b4)
    \item The lead-lag estimated by our index is almost similar in terms of the sign as that of the Hayashi-Yoshida estimator.
    \item Our indicator can find lead-lag effects that the Hayashi-Yoshida estimator cannot find (a3 and b2 and b4).
\end{enumerate}

\begin{figure}
\centering
   \includegraphics[width=\textwidth]{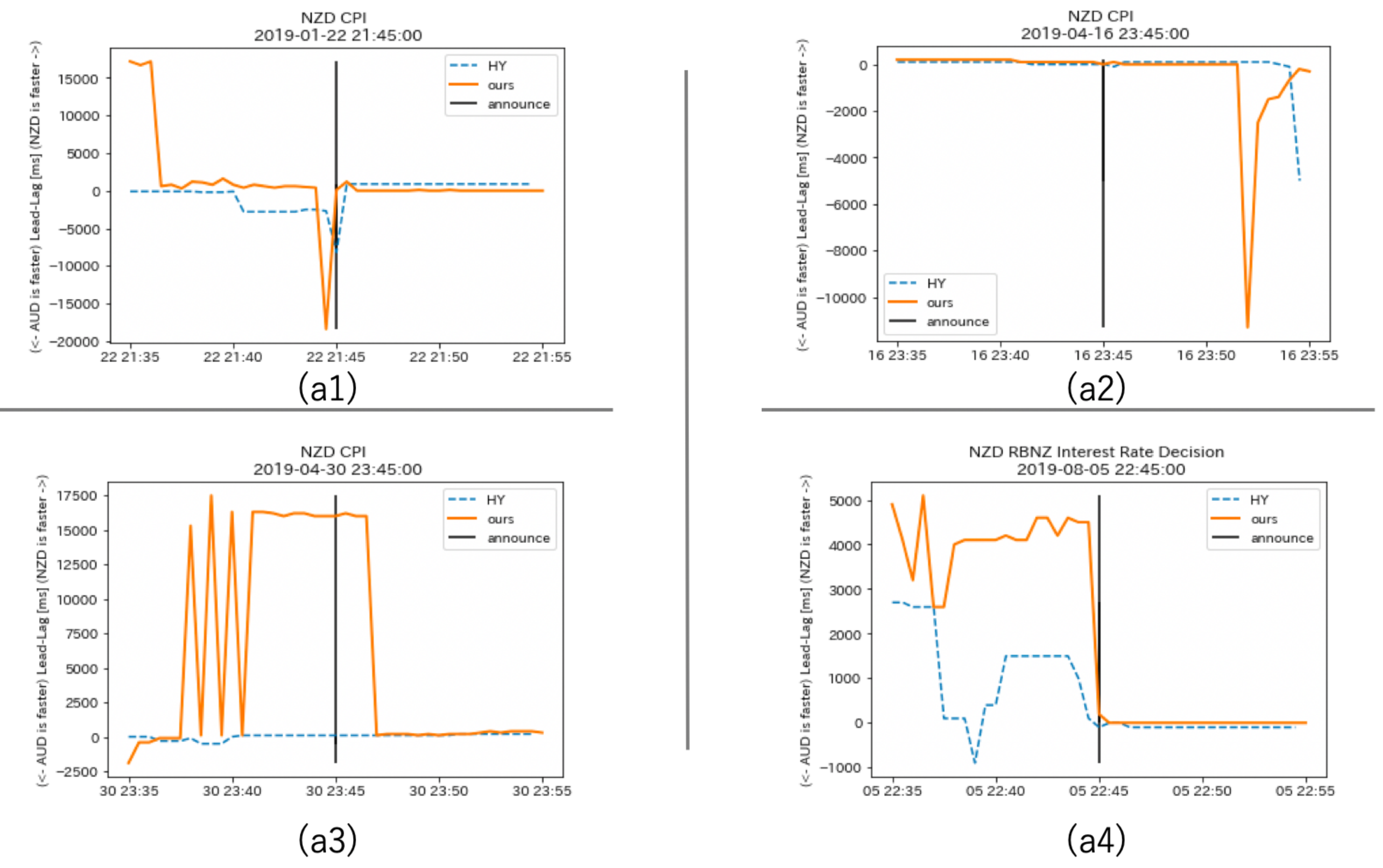}
  \caption{The important announcements of New Zealand.
   When lead-lag paramters is positive, NZD/USD is leading.}
  \label{fig:nzd}
\end{figure}

\begin{figure}
\centering
  \includegraphics[width=\textwidth]{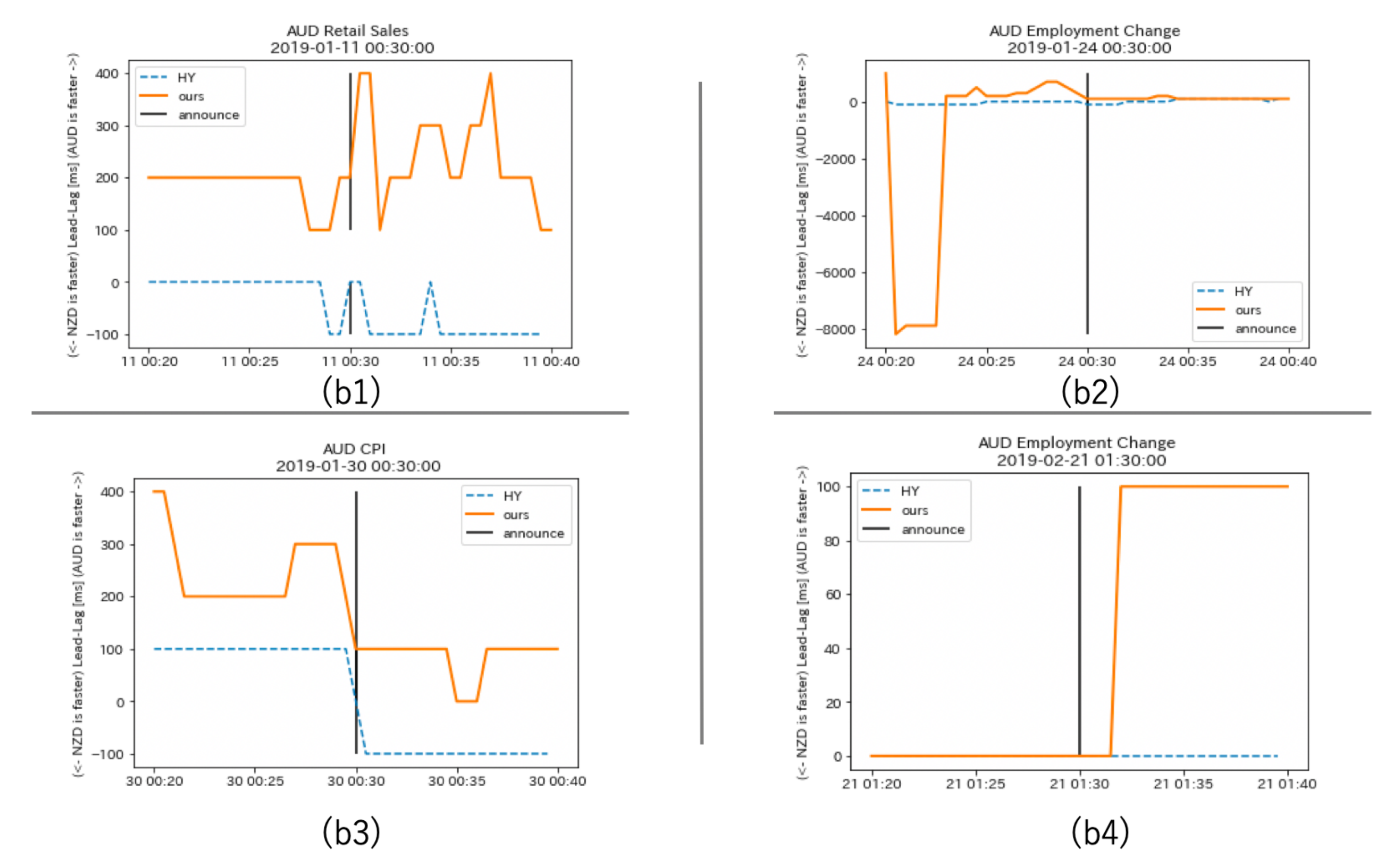}
  \caption{lead-lag parameters before and after the important announcements.
  The orange thick line shows the lead-lag estimated by our method.
  The blue dotted line shows the lead-lag estimated by Hayashi-Yoshida estimator.
 }
  \label{fig:aud}
\end{figure}

\if0
\begin{figure}[htbp]
 \begin{minipage}{0.5\hsize}
  \begin{center}
   \includegraphics[width=\textwidth]{nzd_leadlag.pdf}
  \end{center}
  \caption{The important announcements of New Zealand.
   When lead-lag paramters is positive, NZD/USD is leading.}

  \label{fig:nzd}
 \end{minipage}
 \begin{minipage}{0.5\hsize}
  \begin{center}
   \includegraphics[width=\textwidth]{aud_leadlag.pdf}
  \end{center}
  \caption{The important announcements of Australia.
  When lead-lag paramters is positive, AUD/USD is leading.
  }
  \label{fig:aud}
 \end{minipage}
  \caption{lead-lag parameters before and after the important announcements.
  The orange thick line shows the lead-lag estimated by our method.
  The blue dotted line shows the lead-lag estimated by Hayashi-Yoshida estimator.
  }

\end{figure}
\fi

\section{Conclusion}
We investigated the estimation problems of the lead-lag effect in high-frequency data.
Mainly, we positioned the non-synchronous observation, computational complexity, and time variation of the lead-lag as critical issues.
We proposed NAPLES to solve these issues.
Our method’s estimation was with low computational complexity.
The experiment using simulated data showed that our method captures the lead-lag more accurately than the existing methods do.
The experiment using high-frequency foreign exchange market data showed that there was a lead-lag effect at the time of important announcements.


\appendix
\section{Appendix:Proof of main theorem}
The general form of the main theorem is as follows.
\begin{theorem}\label{thm:1}
Assume that two stochastic processes $X,Y$ follow
geometric Brownian motions.
\[
\left\{ \begin{array}{l} 
dX_{ t } = \sigma_1 X_t dW_t^{(1)}, \ \ X_0=x_0, \\
dY_{ t } = \sigma_2 Y_t dW_t^{(2)}, \ \ Y_0=y_0.\\
\end{array} \right.
\]
where $W _ { t-\theta } ^ { ( 2 ) }$ is the sum of  $W _ { t } ^ {(1)}$ and the independent Brownian motion $W_ { t } ^ { ( 3 ) }$ as follows
\[
W _ { t-\theta } ^ { ( 2 ) } = 
\rho W _ { t } ^ { ( 1 ) }
+
\sqrt{1-\rho^2} W_ { t } ^ { ( 3 ) }
\].
Assume that $X_t$ is observed at $0=s_1<s_2<\cdots<s_n=T$ and $Y_t$ is observed at $0=t_1<t_2<\cdots<t_m=T$.
Let $\overline{t_i}:=\min\{t_j \ | \ s_i \le t_j \}$,
$\underline{t_i}:=\max\{t_j\ | \ s_i > t_j\}$,
$\overline{s_j}:=\min\{s_i \ | \ t_j \le s_i \}$,
$\underline{s_j}:=\max\{s_i \ | \ s_i < t_j\}$ and $T=\max\{s_n,t_m\}$ then we assume that
\begin{itemize}
    \item for all $i$ there exists at most one $j$, such that $s_i < t_j < s_{i+1}$ holds,
    \item for all $j$ there exists at most one $i$ such that $t_j < s_i < t_{j+1}$ holds.
\end{itemize}
Then the  expectation of $R(T;X_t,Y_t)$ will be as follows.
\begin{eqnarray*}
\mathrm{E}[R(T;X_t,Y_t)]&=&
\frac{1}{\pi} \sum_{i=1}^{n-1} 
1_{ s_{i-1} <  \overline{t_{i}}+\theta < s_{i}}
\arcsin (
    \frac{\rho (s_i - \overline{t_{i}} - \theta)}{ \sqrt{(\underline{t_{i+1}}- \overline{t_{i}})(s_i-s_{i-1})} }
)\\
&+&
\frac{1}{\pi}
\sum_{j=1}^{m-1}
1_{\overline{s_{j}} < t_{j}+\theta < \underline{s_{j+1}}}
\arcsin (\frac{\rho (t_j + \theta - \overline{s_{j}})
}{ \sqrt{(\underline{s_{j+1}}-\overline{s_{j}})(t_{j}-t_{j-1})}  })
\\
&-&
\frac{1}{\pi} \sum_{i=1}^{n-1} 
1_{s_{i-1} < \underline{t_{i+1}}+\theta <s_{i}}
\arcsin(
    \frac{\rho  (\underline{t_{i+1}}+\theta- s_{i-1})}{ \sqrt{(\underline{t_{i+1}}- \overline{t_{i}})(s_i-s_{i-1})}
    }
    )
\\
&-&
\frac{1}{\pi}
\sum_{j=1}^{m-1}
1_{\overline{s_{j}} < t_{j-1}+\theta < \underline{s_{j+1}}}
 \arcsin(\frac{\rho (\underline{s_{j+1}}- t_{j-1}-\theta)}{\sqrt{(\underline{s_{j+1}}-\overline{s_{j}})(t_{j}-t_{j-1})} }) .
\end{eqnarray*}
\end{theorem}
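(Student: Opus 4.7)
The plan is to expand $R(T;X_t,Y_t)$ into a double sum of products of signs, compute each term's expectation via the arcsine law for jointly Gaussian variables, and then match the outcome to the four $\arcsin$ summands in the statement.

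I would first rewrite each difference $\hat{Y}_{s_{i+1}}-\hat{Y}_{s_i}=\sum_j b^{(Y)}_{t_j}\,1\{s_i\le t_j<s_{i+1}\}$ and symmetrically $\hat{X}_{t_{j+1}}-\hat{X}_{t_j}=\sum_i b^{(X)}_{s_i}\,1\{t_j\le s_i<t_{j+1}\}$. This turns $R(T;X_t,Y_t)$ into a sum over pairs $(i,j)$ of terms $\pm b^{(X)}_{s_i}b^{(Y)}_{t_j}$ weighted by overlap indicators. Since the log-returns $r^{(X)}_{s_i}$ and $r^{(Y)}_{t_j}$ are jointly Gaussian (after discarding the deterministic It\^o drifts, which are of smaller order than the Brownian fluctuation and do not flip the sign in the regime in which the final formula is stated), I would apply Sheppard's arcsine identity from \cite{Bryc1995}, namely $\mathrm{E}[\sign Z_1\,\sign Z_2]=\tfrac{2}{\pi}\arcsin(\Corr(Z_1,Z_2))$, to each surviving pair.

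The correlation is then read off from the representation $W^{(2)}_u=\rho W^{(1)}_{u+\theta}+\sqrt{1-\rho^2}\,W^{(3)}_{u+\theta}$ with $W^{(3)}$ independent of $W^{(1)}$: the covariance of the $X$-increment over $[s_{i-1},s_i]$ with the $Y$-increment over $[t_{j-1},t_j]$ equals $\sigma_1\sigma_2\rho$ times $|[s_{i-1},s_i]\cap[t_{j-1}+\theta,t_j+\theta]|$, so the correlation is $\rho$ times that overlap length divided by $\sqrt{(s_i-s_{i-1})(t_j-t_{j-1})}$. Under the two structural assumptions (at most one $t$-point in any $(s_i,s_{i+1})$, at most one $s$-point in any $(t_j,t_{j+1})$), the $(i,j)$ with nonzero overlap are in bijection with adjacent observation pairs encoded by $\overline{t_i},\underline{t_i},\overline{s_j},\underline{s_j}$, and the overlap geometry is controlled by whether each of the shifted endpoints $t_{j-1}+\theta$ and $t_j+\theta$ lies inside $[s_{i-1},s_i]$. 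The four possible endpoint-inclusion configurations, two for the first sum and two for the second sum with opposite sign coming from the minus in front of the $\hat{X}$-block, account precisely for the four indicator-weighted $\arcsin$ contributions in the theorem.

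The main obstacle will be the geometric case analysis: for each $(i,j)$ picked out by the outer indicators one must enumerate exactly which configuration of the two shifted endpoints arises and what length of overlap results, and then identify the relevant normalizing interval $(\underline{t_{i+1}}-\overline{t_i})$ or $(\underline{s_{j+1}}-\overline{s_j})$ that appears in the denominator. The non-overlap hypotheses are what collapse the combinatorially many possibilities into the four cases recorded in the statement; keeping the four indicators $1_{s_{i-1}<\overline{t_i}+\theta<s_i}$, $1_{\overline{s_j}<t_j+\theta<\underline{s_{j+1}}}$, $1_{s_{i-1}<\underline{t_{i+1}}+\theta<s_i}$, $1_{\overline{s_j}<t_{j-1}+\theta<\underline{s_{j+1}}}$ straight and verifying that all other pairs contribute zero overlap is the heart of the bookkeeping. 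Once this is done, summing the four families of $\arcsin$ terms with their proper signs yields the stated expression for $\mathrm{E}[R(T;X_t,Y_t)]$.
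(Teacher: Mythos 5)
Your proposal is essentially the paper's own argument: both reduce $\mathrm{E}[R(T;X_t,Y_t)]$ to expectations of products of signs of jointly Gaussian increments, invoke the Gaussian arcsine law (your Sheppard identity $\mathrm{E}[\sign Z_1\,\sign Z_2]=\tfrac{2}{\pi}\arcsin(\Corr(Z_1,Z_2))$ is equivalent to the paper's orthant-probability Lemma $\mathrm{P}(N>0,M>0)=\tfrac14+\tfrac{\arcsin\rho}{2\pi}$), and obtain the correlation as $\rho$ times the length of the overlap of $[s_{i-1},s_i]$ with the $\theta$-shifted $Y$-interval, normalized by the square roots of the two interval lengths, via the representation $W^{(2)}_{t-\theta}=\rho W^{(1)}_{t}+\sqrt{1-\rho^2}W^{(3)}_{t}$ and independent increments. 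The one place you genuinely diverge is the decomposition: you expand $\hat Y_{s_{i+1}}-\hat Y_{s_i}$ into the individual signs $b^{(Y)}_{t_j}$ and treat each pair $(i,j)$ separately, whereas the paper keeps the difference intact and identifies it, under the at-most-one-point-per-interval hypothesis, with the sign of a single aggregated increment $\sign(Y_{\underline{t_{i+1}}}-Y_{\overline{t_i}})$ --- which is precisely where the normalizer $\sqrt{(\underline{t_{i+1}}-\overline{t_i})(s_i-s_{i-1})}$ in the first and third sums comes from; your per-pair bookkeeping would naturally yield $\sqrt{(t_j-t_{j-1})(s_i-s_{i-1})}$ there instead, so to land on the stated formula you would need to make that identification explicit (the paper's own index conventions here are admittedly muddled). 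Everything else --- the covariance computation, the role of the non-overlap hypotheses in collapsing the case analysis to the four indicator-weighted $\arcsin$ families, and your caveat about the It\^o drift in the log-returns, which the paper silently drops --- matches the paper's proof.
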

First, we derive the main theorem from this theorem.
\begin{proof}[Proof of Main Theorem]
The observation times $s_i$ and $t_j$ are always equally spaced $\Delta $ ,so
\begin{enumerate}
    \item if $-\Delta<\theta<0$ then $ s_{i-1} <  \overline{t_{i}}+\theta < s_{i}$ holds for all $i$.
    \item if $0<\theta<\Delta$ then $\overline{s_{j}} < t_{j}+\theta < \underline{s_{j+1}}$ holds for all $j$.
    \item if $-2\Delta < \theta< \Delta$ then $s_{i-1} < \underline{t_{i+1}}+\theta<s_{i}$ holds for all $i$
    \item if $\Delta < \theta< 2\Delta$ then $\overline{s_{j}} < t_{j-1}+\theta<\underline{s_{j+1}}$ holds for all $j$.
\end{enumerate}
Therefore, the indicator function after sigma is 0 and we can derive the main theorem from that.\end{proof}
Second, we prepare a classic Lemma for the normal distribution needed to prove the theorem \cite{Bryc1995}.
\begin{lemma}\label{lemma:1}
For standardized bivariate normal $N, M$ with a correlation coefficient $\rho$, following arcsin law holds.
\[
\mathrm{P}(N>0,M>0)
=\frac { 1 } { 4 } + \frac { \arcsin ( \rho ) } { 2 \pi } 
= \frac { \arccos ( - \rho ) } { 2 \pi }.
\]

\end{lemma}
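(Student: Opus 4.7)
The plan is to exploit the rotational symmetry of the independent standard bivariate normal. First, I would represent $(N,M)$ as a linear image of independent standard normals $(Z_1,Z_2)$ by setting $N=Z_1$ and $M=\rho Z_1+\sqrt{1-\rho^2}\,Z_2$; both marginals are then $N(0,1)$ and $\mathrm{Cov}(N,M)=\rho$, as required.

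Second, since the joint density of $(Z_1,Z_2)$ depends only on $z_1^2+z_2^2$, the angular coordinate $\Theta$ in polar form is uniform on $[0,2\pi)$ and independent of the radius. Hence for any wedge $W$ with vertex at the origin, $\mathrm{P}((Z_1,Z_2)\in W)$ equals the angular measure of $W$ divided by $2\pi$.

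Third, I would translate $\{N>0,M>0\}$ into such a wedge. Setting $\alpha:=\arccos(\rho)\in[0,\pi]$, the event becomes $\{Z_1>0\}\cap\{\cos(\alpha)Z_1+\sin(\alpha)Z_2>0\}$, i.e., the intersection of the two half-planes through the origin with inward unit normals $(1,0)$ and $(\cos\alpha,\sin\alpha)$. These normals enclose an angle of $\alpha$, so the intersection is a wedge of angular extent $\pi-\alpha$. Dividing by $2\pi$ gives
\[
\mathrm{P}(N>0,M>0)=\frac{\pi-\arccos(\rho)}{2\pi}=\frac{\arccos(-\rho)}{2\pi},
\]
using $\arccos(-\rho)=\pi-\arccos(\rho)$, and the equivalent form $\tfrac{1}{4}+\tfrac{\arcsin(\rho)}{2\pi}$ follows from $\arcsin(\rho)+\arccos(\rho)=\pi/2$.

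The argument is essentially geometric bookkeeping and presents no substantial obstacle; the only point requiring care is verifying that the ``intersection wedge has angular extent $\pi-\alpha$'' formula is valid uniformly for $\alpha\in[0,\pi]$, i.e., for both positive and negative $\rho$. This can be sanity-checked at the endpoints $\rho=\pm 1$ (the intersection degenerates to a full half-plane or to the empty set) and at $\rho=0$ (a right-angle quadrant of extent $\pi/2$, giving $1/4$ as expected from independence). If one prefers an analytic route, an alternative is to differentiate $p(\rho):=\mathrm{P}(N>0,M>0)$ in $\rho$ using $\partial_\rho f_\rho=\partial_x\partial_y f_\rho$ for the bivariate normal density $f_\rho$, which yields $p'(\rho)=f_\rho(0,0)=\tfrac{1}{2\pi\sqrt{1-\rho^2}}$, and to integrate from $\rho=0$ where $p(0)=1/4$.
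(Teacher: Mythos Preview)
Your argument is correct: representing $(N,M)$ as a linear image of an isotropic Gaussian and reducing the orthant probability to the angular measure of a wedge is the standard clean route, and your wedge computation $\pi-\alpha$ with $\alpha=\arccos(\rho)$ is right across the full range $\rho\in[-1,1]$, as your endpoint checks confirm. The alternative analytic route via $p'(\rho)=f_\rho(0,0)$ is also valid.

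There is, however, nothing to compare against in the paper: the authors do not prove this lemma at all. They simply quote it as a classical fact with a reference to Bryc's monograph on the normal distribution and then invoke it in the proof of their main theorem. So your write-up goes strictly beyond what the paper supplies.
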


\begin{proof}[Proof of Theorem]
First, consider the first term.
We can assume $b^{(X)}_{s_{i+1}}=-b^{(X)}_{s_{i}}$ and $b^{(X)}_{s_{i}}\neq 0$ holds for all $ i $ because other terms do not affect $R(t)$.
By defining,
$\hat{b}^{(Y)}_{s_{i+1}}:=(\hat{Y}_{s_{i+1}}-\hat{Y}_{s_{i}} ),
\hat{b}^{(X)}_{t_{j+1}}:=(\hat{X}_{t_{j+1}}-\hat{X}_{t_{j}} )
$ we can transform the first term
$
\sum_{i=1}^{n-1} 
b^{(X)}_{s_{i}}(
\hat{Y}_{s_{i+1}}-\hat{Y}_{s_{i}})
=
\sum_{i=1}^{n-1} 
b^{(X)}_{s_{i}}\hat{b}^{(Y)}_{s_{i+1}}
$.
By the assumptions,
$
\hat{b}^{(Y)}_{s_{i+1}}= \sign(Y_{\underline{t_{i+1}}}-Y_{\overline{t_i}}),
\hat{b}^{(X)}_{t_{j+1}}= \sign(Y_{\underline{s_{j+1}}}-Y_{\overline{s_j}}),
$ holds.
Therefore, when calculating their expected value, we calculate the probabilities for the same sign and different sign respectively as follows.
\begin{eqnarray*}
\mathrm{E}[
\sum_{i=1}^{n-1} 
b^{(X)}_{s_{i}}\hat{b}^{(Y)}_{s_{i+1}}
]
&=&
\sum_{i=1}^{n-1} 
\mathrm{E}[
b^{(X)}_{s_{i}}\hat{b}^{(Y)}_{s_{i+1}}
]\\
&=&
\sum_{i=1}^{n-1} 
2\mathrm{P}(
(W^{(1)}_{s_i}-W^{(1)}_{s_{i-1}})
(W^{(2)}_{\underline{t_{i+1}}}-W^{(2)}_{\overline{t_{i}}})
>0
)+1
\end{eqnarray*}
To calculate the final term, we define the random variable as follows,
\[
p_{s_i,s_{i-1}}:=(W^{(1)}_{s_i}-W^{(1)}_{s_{i-1}}),q_{\underline{t_{i+1}},\overline{t_{i}}}:=(W^{(2)}_{\underline{t_{i+1}}}-W^{(2)}_{\overline{t_{i}}})
\]
Because $W^{(1)},W^{(2)}$ are Brownian motions,
$p_{s_i,s_{i-1}}q_{\underline{t_{i+1}},\overline{t_{i}}}$ follows a Gaussian distribution.
To use \ref{lemma:1}, we must calculate the correlation between $p_{s_i,s_{i-1}}$ and $q_{\underline{t_{i+1}},\overline{t_{i}}}$.
Here, $W^{(1)},W^{(2)}$ are Brownian motions with a lead-lag, such that the following equation holds.
\begin{eqnarray*}
q_{\underline{t_{i+1}},\overline{t_{i}}}&=&(W^{(2)}_{\underline{t_{i+1}}}-W^{(2)}_{\overline{t_{i}}})\\
&=&\rho (W_{\underline{t_{i+1}}+\theta } ^ { (1) }- W _ {\overline{t_{i}}+\theta }^{(1)})
+\sqrt{1-\rho^2} (W_{\underline{t_{i+1}}+\theta } ^ { (3) }- W _ {\overline{t_{i}}+\theta }^{(3)})\\
&=&\rho p_{\underline{t_{i+1}}+\theta,\overline{t_{i}}+\theta} + \sqrt{1-\rho^2}r
\end{eqnarray*}
where $r= (W_{\underline{t_{i+1}}+\theta } ^ { (3) }- W _ {\overline{t_{i}}+\theta }^{(3)})$ is a random variable, which is independent of $p$.
Therefore, we can calculate the covariance by the bi-linearity of the covariance and the independence of $ r$.
\begin{eqnarray*}
\mathrm{Cov}(p_{s_i,s_{i-1}},q_{\underline{t_{i+1}},\overline{t_{i}}})
&=&\mathrm{Cov}(p_{s_i,s_{i-1}},
\rho  p_{\underline{t_{i+1}}+\theta,\overline{t_{i}}+\theta} + \sqrt{1-\rho^2}r)\\
&=& 
\rho \mathrm{Cov}(p_{s_i,s_{i-1}},
p_{\underline{t_{i+1}}+\theta,\overline{t_{i}}+\theta}
) + \mathrm{Cov}(p_{s_i,s_{i-1}},\sqrt{1-\rho^2}r)\\
&=& \rho \mathrm{Cov}(p_{s_i,s_{i-1}},
p_{\underline{t_{i+1}}+\theta,\overline{t_{i}}+\theta}
) = \rho 
\mathrm{Cov} (
(W^{(1)}_{s_i}-W^{(1)}_{s_{i-1}}),
(W_{ \underline{t_{i+1}}+\theta } ^ { (1) }- W _ { \overline{t_{i}}+\theta}^{(1)})
)
\end{eqnarray*}
From the independent increment property of Brownian motion,
\[
\mathrm{Corr}(p_{s_i,s_{i-1}},
q_{\underline{t_{i+1}},\overline{t_{i}}})=
\begin{cases}
    \rho (s_i - \overline{t_{i}} - \theta) / \sqrt{(\underline{t_{i+1}}- \overline{t_{i}})(s_i-s_{i-1})} 
    & \text{for } s_{i-1} <  \overline{t_{i}}+\theta < s_{i}  \\
    \rho  (\underline{t_{i+1}}+\theta- s_{i-1}) / \sqrt{(\underline{t_{i+1}}- \overline{t_{i}})(s_i-s_{i-1})}  
    & \text{for } s_{i-1} < \underline{t_{i+1}}+\theta < s_{i}\\
    0 & \text{otherwise}
\end{cases}
\]
is obtained.
By using the lemma, we obtain the following equation.
\[
\mathrm{P}(p_{s_i,s_{i-1}}q_{\underline{t_{i+1}},\overline{t_{i}}}>0)
=
\begin{cases}
     \frac{1}{\pi} \arcsin (
    \frac{\rho (s_i - \overline{t_{i}} - \theta)}{ \sqrt{(\underline{t_{i+1}}- \overline{t_{i}})(s_i-s_{i-1})} }
     ) + \frac{1}{2} 
    & \text{for } s_{i-1} <  \overline{t_{i}}+\theta < s_{i} \\
     \frac{1}{\pi}  \arcsin(
    \frac{\rho  (\underline{t_{i+1}}+\theta- s_{i-1})}{ \sqrt{(\underline{t_{i+1}}- \overline{t_{i}})(s_i-s_{i-1})}
    }
    ) + \frac{1}{2} 
    & \text{for } s_{i-1} < \underline{t_{i+1}}+\theta < s_{i}\\
    \frac{1}{2}  & \text{otherwise}
\end{cases}
\]
This is repeated for the second term and the theorem is obtained.
\end{proof}

\end{document}